\newtheorem{theorem}{Theorem}
\newtheorem{lemma}{Lemma}
\newcommand{\nn}{\nonumber\\}
\begin{document}
\title{Data Freshness in Leader-Based Replicated Storage} 

\author{\IEEEauthorblockN{Amir Behrouzi-Far, Emina Soljanin and Roy D. Yates}
\IEEEauthorblockA{\textit{Department of Electrical and Computer Engineering, Rutgers University} \\
\{amir.behrouzifar,emina.soljanin,ryates\}@rutgers.edu}}

\maketitle
\begin{abstract}
Leader-based data replication improves consistency in highly available distributed storage systems via sequential writes to the ``leader'' nodes. After a write has been committed by the leaders, ``follower'' nodes are written by a multicast mechanism and are only guaranteed to be eventually consistent. With Age of Information (AoI) as the freshness metric, we characterize how the number of leaders affects the freshness of the data retrieved by an instantaneous read query. In particular, we derive the average age of a read query for a deterministic model for the leader writing time and a probabilistic model for the follower writing time.  We obtain a closed-form expression for the average age for exponentially distributed follower writing time. Our numerical results show that, depending on the relative speed of the write operation to the two groups of nodes, there exists an optimal number of leaders which minimizes the average age of the retrieved data, and that this number increases as the relative speed of writing on leaders increases. 
\end{abstract}

\section{introduction}
Databases are a central part of online services \cite{chang2008bigtable,decandia2007dynamo}. They constantly execute write and read operations from different clients/services. A large fraction of write operations are updates to the currently stored data \cite{armstrong2013linkbench}. Each update increments the version of the data and sets its timestamp to the current time. Read operations, on the other hand, ask for the latest version of the data, which also has the most recent timestamp. A write/read operation is consistent if it writes to or reads from the latest version of the data.

Data availability is a major objective in database systems \cite{lakshman2010cassandra,ford2010availability}. To achieve high availability, data is partitioned into multiple chunks, and each chunk gets replicated in multiple storage nodes \cite{wiesmann2000understanding,aktas2019load}. These replicas can possibly be located in different geographic locations to further improve data availability and storage robustness. However, according to the CAP theorem \cite{gilbert2002brewer}, availability and consistency are not achievable at the same time in a distributed system. Therefore, availability-consistency trade-offs are inevitable in a distributed database.

Leader-based (replicated) databases have been proposed to increase data consistency in large scale distributed storage \cite{corbett2013spanner,sharov2015take}. In these databases, some of the storage nodes assigned to a given chunk of data are elected as leaders. Every write operation to the data chunk is first written to the leaders, where sequential writes from one leader to the next guarantee that they always have the latest version of the data. Thus, write consistency is guaranteed at the leaders. After sequential writes to the leaders, the update is committed and gets transmitted to a set of follower nodes via a multicast mechanism. After an update is committed, it becomes available to read and further updates may be initiated. Due to the guaranteed leaders' consistency and high availability (by the existence of the replicas), leader-based systems are widely deployed in practice. For example, Google Spanner \cite{corbett2013spanner}, Amazon DynamoDB \cite{mathew2014overview}, Apple FoundationDB \cite{chrysafis2019foundationdb} have used it as a part of their storage ecosystem, where they employ the Paxos algorithm \cite{chandra2007paxos} for leader election among the storage nodes. The Raft algorithm \cite{ongaro2014search}, an alternative for Paxos, has been used by MongoDB and InfluxDB as a leader election algorithm. These databases are $eventually$ consistent in that the data is guaranteed to be consistent among the leader nodes, and the data on each follower eventually will become consistent, in the absence of subsequent writes.

Freshness of retrieved data in eventually consistent database systems have been studied in \cite{rahman2017characterizing,golab2011analyzing,bailis2014quantifying,bannouradaptive}. With a probabilistic definition of consistency, \cite{rahman2017characterizing} studies the delay-consistency trade-off in a distributed database system. Authors in \cite{bailis2014quantifying} establish a version-based staleness metric and investigate the delay benefits of eventually consistent databases and shed light on why they are good enough in practice. Data staleness in dynamo-style quorum-based replicated storage systems is studied in \cite{zhong2018minimizing}. In this work, using AoI as the freshness metric, the average age of the retrieved data is studied for different types of quorum consensus. AoI has been recently used  to characterize data freshness in various types of data transmission systems \cite{bastopcu2019age,Arafa-YUP-arxiv2018,Bacinoglu-SUBM-isit2018,Kadota-SUBSM-arxiv2018,Beytur-UB-SIU2018,Sert-SBUB-SIU2018,Arafa-Ulukus-asilomar2017,Baknina-Ulukus-arxiv2018coded,bastopcu2019age2,Kadota-UBSM-allerton2016,He-YE-wiopt2016,Zhong-YS-allerton2017,Zhong-YS-isit2018}.

Despite the existing work on eventually consistent databases, we are far from understanding the dynamics governing data freshness in such systems. In particular, in a leader-based model, we know very little about the effect of the number of leaders on the freshness of the data retrieved by a read operation. Having more leaders improves the consistency of the database since leaders are guaranteed to have the latest version of the data. On the other hand, due to the sequential, and thus time-costly writes, having many leaders can prolong the commit time of an update, which makes the update more stale when it becomes available to read. Furthermore, the relative speed of writing an update to the leaders, compared to the multicast writing to the follower nodes, is a critical factor in optimizing system parameters for increasing the freshness of data retrieved by a read operation.

Using the AoI metric\cite{kaul2012real,Kaul-Yates-isit2018priority,Kaul-YG-infocom2012,SunCyr-spawc2018,Kosta-PEA-isit2017,Yates-NSZ-isit2017,Zhong-YS-isit2017,Kaul-Yates-isit2017,Yates-arxiv2018networks,Sun-UBYKS-IT2017UpdateorWait,Zhong-YS-spawc2018}, we study the average freshness of the data retrieved by a read query in a leader-based replicated database. We develop two different models for write operations to 1) the leaders, which are written by sequential writes, and 2) the followers, which are written through a multicast. We then derive the dependence of the average age of the data retrieved by a read query on our model parameters. We assume the writing time to the leaders is deterministic and scales linearly with the number of leaders. However, it is a random variable for each follower. For the exponential write time distribution of the follower nodes, we drive a closed form for the average age of the retrieved data by a read operation. Our numerical results show that, depending on the relative speed of the write operation to the two groups of nodes, there exists an optimal number of leaders which minimizes the average age of the retrieved data. In addition, we observe that the average age monotonically increases or decreases with the number of leaders, depending the relative speed of the write operations. Furthermore, for a system with time-varying demands, we show that it is possible to increase the number of followers without increasing the average age, when the data demand gets high.

\section{System model and Problem Formulation}
Consider a leader-based database system, where each data chunk is replicated on $n$ nodes. Out of the $n$ nodes storing the same chunk one is elected as the \textit{leader}. To ensure write consistency, any write query first has to update the leader. To further improve consistency, the leader may initiate a series of sequential writes some other nodes, which with the original leader form the \textit{leaders} group. We refer to the rest of the nodes as \textit{followers}. An update is \textit{initiated} to the leaders and is \textit{committed} once it is written to all leaders. Each update is written in an update \textit{round}, which increments the version of the data and updates its timestamp to the update initiation time.  A node is said to be consistent if it is written with the latest update. Thus all leaders get consistent in every update round, the followers may only be eventually consistent.

In our model, fresh updates are initiated to the leaders at time instances $0,c,2c,\dots$. 
The write operation requires time $c$ to complete, and thus the $k$th update initiated at time $kc$ is committed at time $(k+1)c$. We refer to the time interval of length $c$ as a write \textit{slot}. After a write is committed to all leaders, it is forwarded to the followers, and the next write is initiated by the system. An update becomes available to read only after it is committed. We assume that forwarding an update to the followers happens through a multicast mechanism so that a random number of follower nodes also receive the latest update in a given round. Specifically, we assume the write time to each follower node is described by the random variable $T_w$. Thus a follower receives the latest update only if $T_w<c$.  We also assume that the $T_w$ are iid across the followers, with CDF $F_w(t)$. We refer to a follower with the latest version of the data as \textit{consistent}; otherwise a followers is \textit{non-consistent}. Thus, the set of consistent followers may be different in each round.

When a read query is generated, it is sent to $r$ nodes. The \textit{query size} parameter $r$ is fixed across the queries. After reading from $r$ possibly different versions of data, the system responds to the query with the most recent version among the $r$ accessed nodes. A read query is consistent if it retrieves the latest version of the data. The consistency of a read query is guaranteed if it is sent to at least one leader. However, sending every read query to a leader may overload the nodes in the leader group, which in turn reduces their availability \cite{vogels2009eventually}. In practical, implementations of a leader-based database, e.g.\ Amazon DynamoDB, consistency of the read queries is traded for availability of the leader nodes by sending each query to $r$ \textit{randomly} chosen nodes.

In our model, the system cancels the write of the $k$th update on a follower if it is not written by the time $(k+2)c$, i.e. the end of time slot $k+1$. In other words, the $k$th update gets preempted when update $k+1$ is initiated to the followers. Thus, the age process at the followers is statistically the same across both the followers and the time slots. The age process at the leaders is the same across the leaders and the time slots, as well. Therefore, to study the age of a read query it is sufficient to study the age process at the nodes in a time slot $[kc,(k+1)c)$, for any $k\in\{0,1,2,\dots\}$. We assume the system has started running at $-\infty$ and the arrival time of a read query $T_a$ is a uniform random variable in the range $[0,c)$. Arrivals in time slots $[kc,(k+1)c), k=1,2,\dots$ observe statistically the same age instances at the nodes, as arrivals in the interval $[0,c)$. We further note that, practical evaluations in \cite{bailis2012probabilistically} show that the average latency of read operations is an order of magnitude smaller than the write operations. Therefore, we assume the read queries are \textit{instantaneous} in this work.

We study the average \textit{age} of the data retrieved by a read query, defined as follows. The age of information $\Delta_i(t)$ (or simply the \textit{age}) at node $i$ at time $t$ is the time difference between time $t$ and the timestamp $u_i(t)$ of the latest data update at node $i$:
    \begin{equation}
        \Delta_i(t)=t-u_i(t).
    \end{equation}
The age of a read query $\Delta(t)$ at time $t$ is defined as the minimum age over the set $\CMcal{S}_r$ of nodes the query is sent to:
    \begin{equation}
        \Delta(t)=\underset {i\in \CMcal{S}_r}{\textup{min}} \Delta_i(t).
    \end{equation}

Leaders experience a different age process than the followers since, in each update round, leaders are guaranteed to get the most recent update of the data while the followers may or may not receive it. The age processes at a leader and a follower are shown in Fig.~\ref{fig:ageEvol}. Leaders are written in every update round and their data becomes available only after every leader is updated and the write is committed, which is $c$ units of time after the update initiation. Thus, the age process is the same for every leader node, as shown by $\Delta_{\text{leader}}$ in Fig. \ref{fig:ageEvol}. 

The age processes at the follower nodes are statistically identical, and illustrated by $\Delta_{\text{follower}}(t)$ in Fig.~\ref{fig:ageEvol}, since the write time to the follower nodes are iid random variables. (However, in each update round, a random subset of followers receives the data at random times.)

    \begin{figure}[t]
        \centering
        \includegraphics[width=\columnwidth, trim={20 20 10 20},clip]{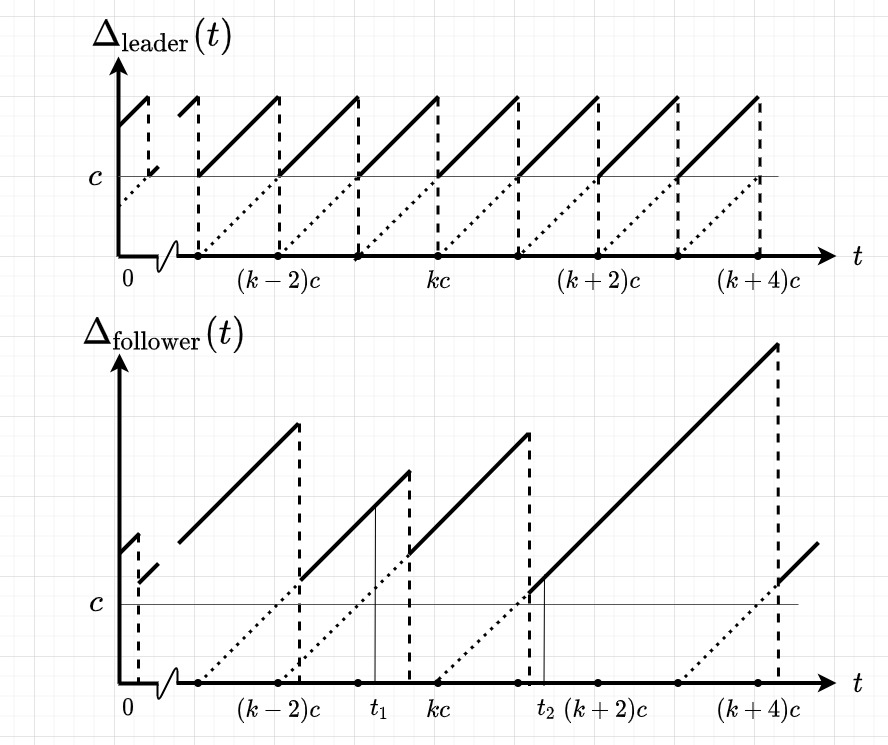}
        \caption{The age process for a sample leader node and a sample follower node. The leader nodes experience the same age process, illustrated by $\Delta_{\text{leader}}(t)$. The follower nodes may experience different yet statistically identical age process, a sample path of which is illustrated by $\Delta_{\text{follower}}(t)$.}
    \label{fig:ageEvol}
    \end{figure}

\section{Age Analysis}
For a read query, $r$ instances of the age processes are sampled from the collective set of $l$ leaders and $n-l$ followers. The age of the read is the minimum of the ages at the queried  nodes.  
The age of the retrieved data depends on the the types of nodes a read is sent to. Specifically, we analyse the average age of a read query based on the following partition:
    \begin{itemize}
        \item $B_1$: At least one node is chosen from the leaders,
        \item $B_2$: No node is chosen from the leaders.
    \end{itemize}
 The probability of having at least one leader queried is
    \begin{equation}
        \textup{Pr}\{B_1\}=\begin{cases}
              1-\frac{\binom{n-l}{r}}{\binom{n}{r}} & r\leq n-l,\\
              1    & r>n-l.
        \end{cases}
    \end{equation}
In what follows, we study the non-trivial case $r\leq n-l$.

Since leaders are guaranteed to receive every update, they are always consistent, and with the freshest data. Therefore, the age of the retrieved data in event $B_1$ is the age at leaders, characterized by the following lemma.
\begin{lemma}
Given $B_1$, the average age of a read query is
    \begin{equation}
        \mathbbm{E}[\Delta|B_1]=3c/2.
    \end{equation}
\end{lemma}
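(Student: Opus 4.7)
The plan is to exploit the fact that, within a write slot, the age at every leader is a fully deterministic function of the arrival time $T_a$, so conditioning on $B_1$ collapses the minimum over $\CMcal{S}_r$ to this single value.

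First I would pin down the leader age process. Since an update becomes readable only once it is committed, during the slot $[0,c)$ the freshest \emph{available} version at any leader is the update initiated at time $-c$ and committed at time $0$, whose timestamp is $-c$. Hence at arrival time $T_a\in[0,c)$ the age at any leader equals $\Delta_{\text{leader}}(T_a) = T_a - (-c) = T_a + c$, and by stationarity (the system has been running since $-\infty$) this holds in every slot. This is exactly the sawtooth curve labeled $\Delta_{\text{leader}}$ in Fig.~\ref{fig:ageEvol}.

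Next I would argue that on $B_1$ the read age equals the leader age. No node in the system can hold a version with timestamp newer than $-c$ during slot $[0,c)$, because update $0$ is not yet committed and therefore not available anywhere; followers either have the same update $-c$ (if consistent) or an older one (if non-consistent), while leaders are guaranteed to have update $-c$. Thus whenever at least one leader is in $\CMcal{S}_r$, the minimum in the definition of $\Delta(t)$ is attained at that leader, and $\Delta\mid B_1 = T_a + c$.

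Finally, the claim follows by taking expectation over $T_a\sim\mathrm{Uniform}[0,c)$:
\begin{equation}
\mathbbm{E}[\Delta\mid B_1] = \mathbbm{E}[T_a] + c = \tfrac{c}{2} + c = \tfrac{3c}{2}.
\end{equation}
There is no real obstacle here; the only subtlety worth stating explicitly is why a (possibly consistent) follower cannot beat a leader, which is resolved by the ``available only after commit'' rule in the system model.
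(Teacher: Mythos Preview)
Your proof is correct and follows essentially the same approach as the paper: identify that on $B_1$ the read age equals the leader age $c+T_a$, then average over $T_a\sim\mathrm{Uniform}[0,c)$. You are more explicit than the paper about why a consistent follower cannot undercut a leader (the ``available only after commit'' observation), which is a welcome clarification but not a different argument.
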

\begin{proof} 
When $B_1$ occurs, the instantaneous age of the read is $c+T_a$, where $c$ is the commit time of a write operation and $T_a$ is the arrival time of the read query, which is uniformly distributed in the interval $\left[0,c\right)$. Then the average age of a read can be calculated as
  \begin{align*}
            \mathbbm{E}[\Delta|B_1]&=\int_0^cf_{T_a}(t)\mathbbm{E}[\Delta|B_1,T_a=t]dt,\\
            &=\frac{1}{c}\int_0^c(c+t)dt=3c/2.\qquad\qquad\qquad\qquad
            \qedhere
\end{align*}
\end{proof}
When $B_2$ occurs, the age of the read query is the age at the most recently updated follower, among the $r$ followers the read query is sent to. A follower is consistent at the arrival time of a read $T_a$ if it is written with the most recent update by $T_a$. For instance, in 
Fig.~\ref{fig:ageEvol}, the follower would not return the most recent $(k-2)$th update to a read arriving at $T_a=t_1$. Whereas, it would return the most recent $k$th update to a read arriving at $T_a=t_2$. The following lemma gives the average age of a read query in event $B_2$.
\begin{lemma}
Given $B_2$, the average age of a read query is
    \begin{equation}
        \begin{split}
            &\mathbbm{E}[\Delta|B_2]=\frac{3c}{2}+\frac{\int_0^c\left[1-F_w(t)\right]^rdt}{1-[1-F_w(c)]^r}.
        \end{split}
    \label{equ:EAgeA2_2}
    \end{equation}
\end{lemma}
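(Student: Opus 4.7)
The plan is to express each follower's age at query time $T_a$ as $T_a + K_i c$ for an integer $K_i\ge 1$ counting how many update rounds back the follower's most recently installed version lies, and then to analyse $K=\min_i K_i$ over the $r$ queried followers. Because every follower that holds the $-k$th update shares the same timestamp $-kc$, the minimum age is automatically of the form $T_a+Kc$, which reduces the task to computing $\mathbbm{E}[K\mid T_a]$ and then averaging over $T_a$.

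First I would formalize which rounds are ``installed'' at a given follower. Under the preemption rule, follower $i$ holds update $-k$ iff its write time for that round met the corresponding deadline: for $k=1$ the forwarding window is $[0,c)$ and we need $T_{w,i}^{(-1)}\le T_a$, while for $k\ge 2$ the full window of length $c$ has already elapsed by query time and the condition is $T_{w,i}^{(-k)}<c$. By the i.i.d.\ assumption on write times across followers and rounds, these events are mutually independent.

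Writing $p(t)=[1-F_w(t)]^r$ and $q=[1-F_w(c)]^r$, the event $\{K>k\}$ coincides with the event that none of the $r$ queried followers received any of the updates $-1,\dots,-k$. Independence then gives
\[
\Pr(K>k\mid T_a)=p(T_a)\,q^{k-1},\qquad k\ge 1,
\]
so summing the tail yields $\mathbbm{E}[K\mid T_a]=1+p(T_a)/(1-q)$ and hence
\[
\mathbbm{E}[\Delta\mid T_a]=T_a+c+\frac{c\,p(T_a)}{1-q}.
\]
Averaging against the uniform density of $T_a$ on $[0,c)$ and using $\mathbbm{E}[T_a]=c/2$ produces the claimed expression.

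The main obstacle is justifying the clean geometric tail for $K$. This rests on three ingredients: the preemption rule, which decouples write attempts across different rounds; the i.i.d.\ assumption on $T_w$ across both followers and rounds; and the asymmetry between round $-1$ (partial window $[0,T_a]$, contributing the factor $p(T_a)$) and rounds $-2,-3,\dots$ (complete windows of length $c$, each contributing an independent factor $q$). Once these three points are made explicit, the rest of the argument is a routine tail-sum and integration.
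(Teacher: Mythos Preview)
Your proof is correct and follows essentially the same route as the paper's. Both arguments write the age at follower $i$ as $T_a+K_ic$ (the paper calls $K_i$ by $Z_{(i)}$), reduce to the minimum $K=\min_i K_i$, derive the tail $\Pr(K>k\mid T_a)=[1-F_w(T_a)]^r[1-F_w(c)]^{r(k-1)}$ from independence across followers and rounds together with the asymmetry between the partial window for round $-1$ and the full windows for earlier rounds, and finish by a tail sum and averaging over the uniform $T_a$; the paper merely orders the steps slightly differently by first computing $\Pr(Z_{(i)}>z\mid T_a)$ for a single follower and then raising to the $r$th power.
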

\begin{proof}
Under event $B_2$, the data is read from one of the $r$ selected followers, which we index $(1),(2),\dots,(r)$. For a given update to be successful at a follower, it should be written at most $c$ units of time past the commitment of that update. Thus, there is a fixed probability $p_c=F_w(c)$ (across the update rounds) that a follower successfully receives an update. Accordingly, we define the geometric random variable $Z_{(i)}$ as the number of update rounds follower $(i)$ has missed after its latest update. Thus, $Z_{(i)}\sim \text{Geo}(1-p_c)$. At the time of arrival $T_a$, the age at the selected follower $i$ is
    \begin{equation}
        \Delta_{(i)}=Z_{(i)}c+T_a.
    \end{equation}
The age of a read query is
    \begin{equation}
        \Delta=\min\{\Delta_{(1)},\Delta_{(2)},\dots,\Delta_{(r)}\}.
    \end{equation}
Therefore, the average age of a read can be written as
    \begin{equation}
        \begin{split}
            \mathbbm{E}[\Delta|B_{2}]&=\mathbbm{E}[\min\{Z_{(1)},\dots,Z_{(r)}\}|B_2]c+\mathbbm{E}[T_a|B_{2}],\\
            &=\mathbbm{E}[Z_{\min}|B_2]c+\mathbbm{E}[T_a],
        \end{split}
    \label{equ:EAgeA2}
    \end{equation}
where $Z_{\text{min}}$ is the number of missed update rounds at the most recently updated follower, among the $r$ followers the read query is sent to. Since the $Z_{(i)}$ are independent,
        \begin{align}
            \mathbbm{E}[Z_{\text{min}}|B_2]&=\int_{0}^{c}f_{T_a}(t)\mathbbm{E}[Z_{\text{min}}|T_a=t,B_2]dt\nn
            &=\frac{1}{c}\int_{0}^{c}\sum_{z=0}^\infty\left(\textup{Pr}\{Z_{(i)}>z|T_a=t,B_2\}\right)^rdt
            \label{equ:EZmin}
        \end{align}
Defining $F_w(t)=p_t$,
        \begin{align}
            \textup{Pr}\{Z_{(i)}=1|T_a=t,B_2\}&=p_t,\nn
            \textup{Pr}\{Z_{(i)}=2|T_a=t,B_2\}&=(1-p_t)p_c,\nn
            \textup{Pr}\{Z_{(i)}=3|T_a=t,B_2\}&=(1-p_t)(1-p_c)p_c,\nn
            &\vdots\nn
            \textup{Pr}\{Z_{(i)}=j|T_a=t,B_2\}&=(1-p_t)(1-p_c)^{j-2}p_c.
        \end{align}
Therefore,
        \begin{align}
            \textup{Pr}\{Z_{(i)}>0|T_a=t,B_2\}&=1,\nn
            \textup{Pr}\{Z_{(i)}>1|T_a=t,B_2\}&=1-p_t,\nn
            \textup{Pr}\{Z_{(i)}>2|T_a=t,B_2\}&=(1-p_t)(1-p_c),\nn
            &\vdots\nn
            \textup{Pr}\{Z_{(i)}>z|T_a=t,B_2\}&=(1-p_t)(1-p_c)^{z-1}.\label{equ:CCDF}
        \end{align}
Substituting (\ref{equ:CCDF}) in (\ref{equ:EZmin}) gives
        \begin{align}
            \mathbbm{E}[Z_{min}|B_2]&=\frac{1}{c}\int_{0}^{c}\sum_{z=0}^\infty\left(\textup{Pr}\{Z_{(i)}>z|T_a=t,B_2\}\right)^rdt\nn
            &=\frac{1}{c}\int_{0}^{c}\left[1+\sum_{z=1}^\infty(1-p_t)^r(1-p_c)^{r(z-1)}\right]dt\nn
            &=1+\frac{1}{c}\sum_{z=1}^\infty(1-p_c)^{r(z-1)}\int_0^c\left[1-F_w(t)\right]^rdt\nn
            &=1+\frac{1}{c}\frac{\int_0^c\left[1-F_w(t)\right]^rdt}{1-(1-p_c)^r}.\label{equ:EZmin2}
        \end{align}
    
Finally, by substituting (\ref{equ:EZmin2}) in (\ref{equ:EAgeA2}), 
    \begin{equation*}
        \begin{split}
            \mathbbm{E}[\Delta|B_{2}]&=\frac{3c}{2}+\frac{\int_0^c\left[1-F_w(t)\right]^rdt}{1-[1-F_w(c)]^r}.
        \end{split}
 \qquad\qquad\qquad       \qedhere
    \end{equation*}
\end{proof}

    \begin{figure}[t]
        \centering
        \includegraphics[width=\columnwidth,trim={8 3 40 40},clip]{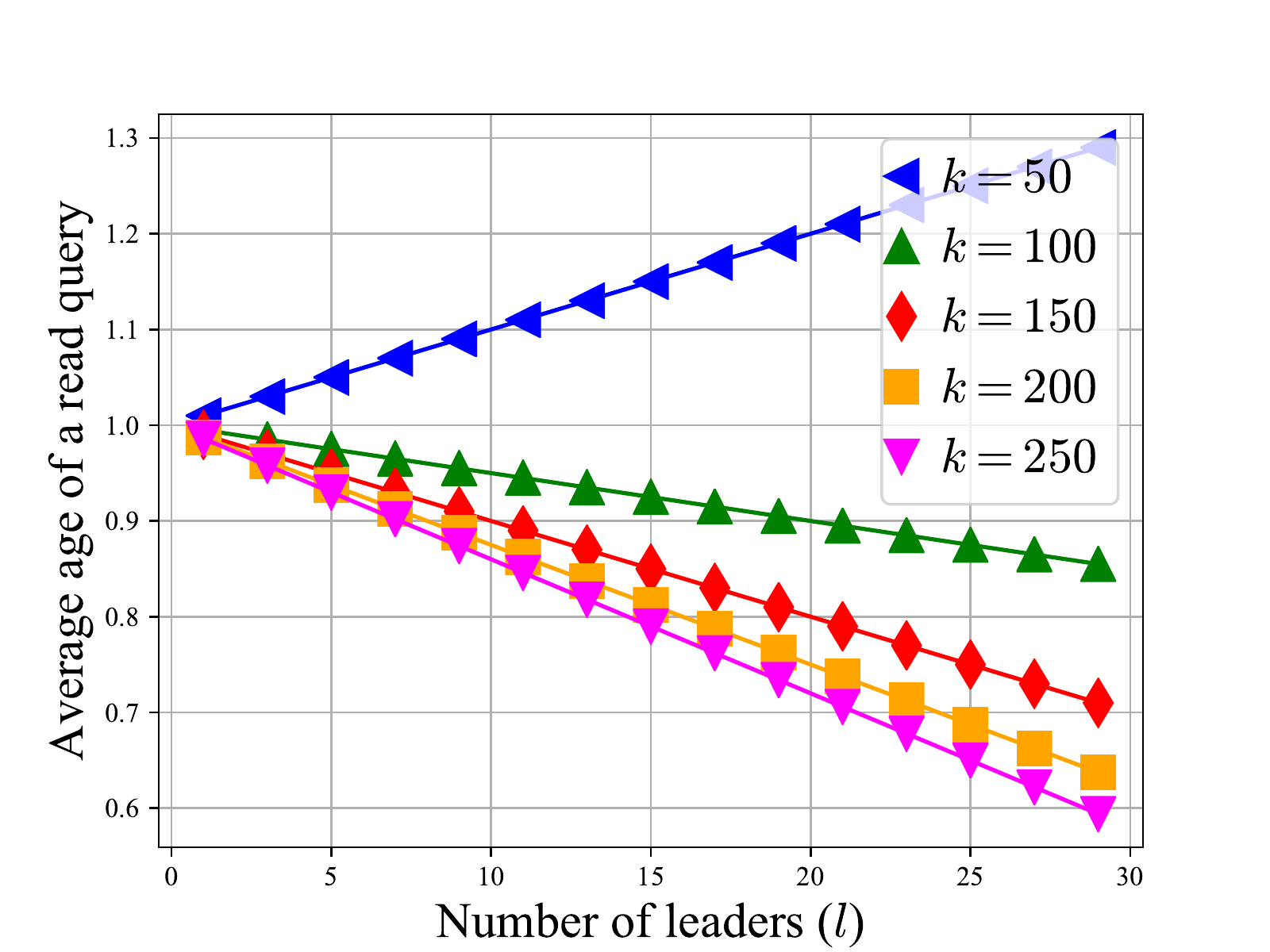}
        \caption{The average age of a read query vs. the number of leaders, see (\ref{equ:r1}), when $n=50$, $r=1$ and $\lambda=1$. Increasing $l$ either monotonically increases the average age ($k=50$) or monotonically increases it ($k\geq100$).}
    \label{fig:lefkt_r1}
    \end{figure}

\begin{theorem}
When the follower writing time $T_w$ is exponential$(\lambda)$, the average age of a read query is
    \begin{equation}
        \mathbbm{E}[\Delta]=\frac{3c}{2}+\frac{\binom{n-l}{r}}{\binom{n}{r}}\frac{1}{\lambda r}.
    \label{equ:avgAgeExp}
    \end{equation}
\end{theorem}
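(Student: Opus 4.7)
The plan is to apply the law of total expectation over the partition $\{B_1,B_2\}$ and then specialize the resulting integral to the exponential case. Concretely, I would write
\begin{equation*}
\mathbbm{E}[\Delta]=\textup{Pr}\{B_1\}\mathbbm{E}[\Delta|B_1]+\textup{Pr}\{B_2\}\mathbbm{E}[\Delta|B_2],
\end{equation*}
and substitute the previously derived values $\textup{Pr}\{B_2\}=\binom{n-l}{r}/\binom{n}{r}$, $\textup{Pr}\{B_1\}=1-\textup{Pr}\{B_2\}$, together with Lemma~1 and Lemma~2. Since both conditional expectations share the $3c/2$ term, the two probabilities combine to $1$ on that piece, leaving
\begin{equation*}
\mathbbm{E}[\Delta]=\frac{3c}{2}+\frac{\binom{n-l}{r}}{\binom{n}{r}}\cdot\frac{\int_0^c\left[1-F_w(t)\right]^r dt}{1-[1-F_w(c)]^r}.
\end{equation*}

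Next I would plug in $F_w(t)=1-e^{-\lambda t}$, so that $1-F_w(t)=e^{-\lambda t}$. The numerator becomes $\int_0^c e^{-\lambda r t}\,dt = (1-e^{-\lambda r c})/(\lambda r)$, while the denominator becomes $1-e^{-\lambda r c}$. These two factors cancel cleanly, leaving only $1/(\lambda r)$, which is exactly the factor appearing in the theorem. Assembling the pieces then yields the claimed closed form.

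The proof is essentially a bookkeeping exercise rather than a technically difficult argument, so there is no real obstacle; the only step that requires any care is verifying that the exponential assumption is precisely what causes the numerator and denominator of the Lemma~2 ratio to share the common factor $1-e^{-\lambda r c}$ and collapse to the memoryless mean $1/(\lambda r)$. I would briefly flag this as the conceptual reason the answer takes such a simple form, and note that the same derivation would work for any $F_w$, but the resulting expression would retain the integral ratio rather than simplifying.
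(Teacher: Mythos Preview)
Your proposal is correct and matches the paper's approach: the paper substitutes $F_w(t)=1-e^{-\lambda t}$ into the ratio in Lemma~2, observes the same cancellation of the factor $1-e^{-\lambda r c}$ to obtain $1/(\lambda r)$, and then (implicitly) combines with Lemma~1 via the partition $\{B_1,B_2\}$ exactly as you describe. The only difference is that you spell out the law-of-total-expectation bookkeeping, whereas the paper leaves that step tacit.
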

\begin{proof}
With $F_w(t)=1-e^{-\lambda t}$,
    \begin{equation*}
        \int_0^c\left[1-F_w(t)\right]^rdt=\frac{1}{\lambda r}\left(1-e^{-\lambda rc}\right),
    \end{equation*}
and, $1-[1-F_w(c)]^r=\left(1-e^{-\lambda rc}\right)$, which by substituting in (\ref{equ:EAgeA2_2}) completes the proof.
\end{proof}

\section{Numerical Results}
In this section, we present numerical results based on our average age analysis. First, we study the effect of the number of leaders on the average age of a read query. Since the writes to the leaders occur sequentially, the number of leaders should affect the commit time $c$. On the other hand, knowing that all the nodes are the members of the same ecosystem, there should be a connection between the writing time to a leader and that of a follower node. Accordingly, we assume $c$ is related to the number of leaders $l$ and the rate $\lambda$ of the write process to the followers by
    \begin{equation}
        c=\frac{l}{k\lambda},
    \end{equation}
where $k$ can be interpreted as the relative speed of writing to the leaders compared to writing to a follower. Note that this is not the only model to scale $c$ with the other system parameters; there are other possibilities. We think of $k$ as an inherent factor of the underlying system and show our results for a range of values. With this model and $T_w\sim \text{Exp}(\lambda)$, the average age of a read query is
    \begin{equation}
        \mathbbm{E}[\Delta]=\frac{1}{\lambda}\left[\frac{3l}{2k}+\frac{1}{r}\frac{\binom{n-l}{r}}{\binom{n}{r}}\right].
    \label{equ:avgAgeModel}
    \end{equation}
We further assume that the time unit is chosen such that $\lambda=1$ and the average write time to a follower is one time unit.
    \begin{figure}[t]
        \centering
        \includegraphics[width=\columnwidth,trim={8 3 40 40},clip]{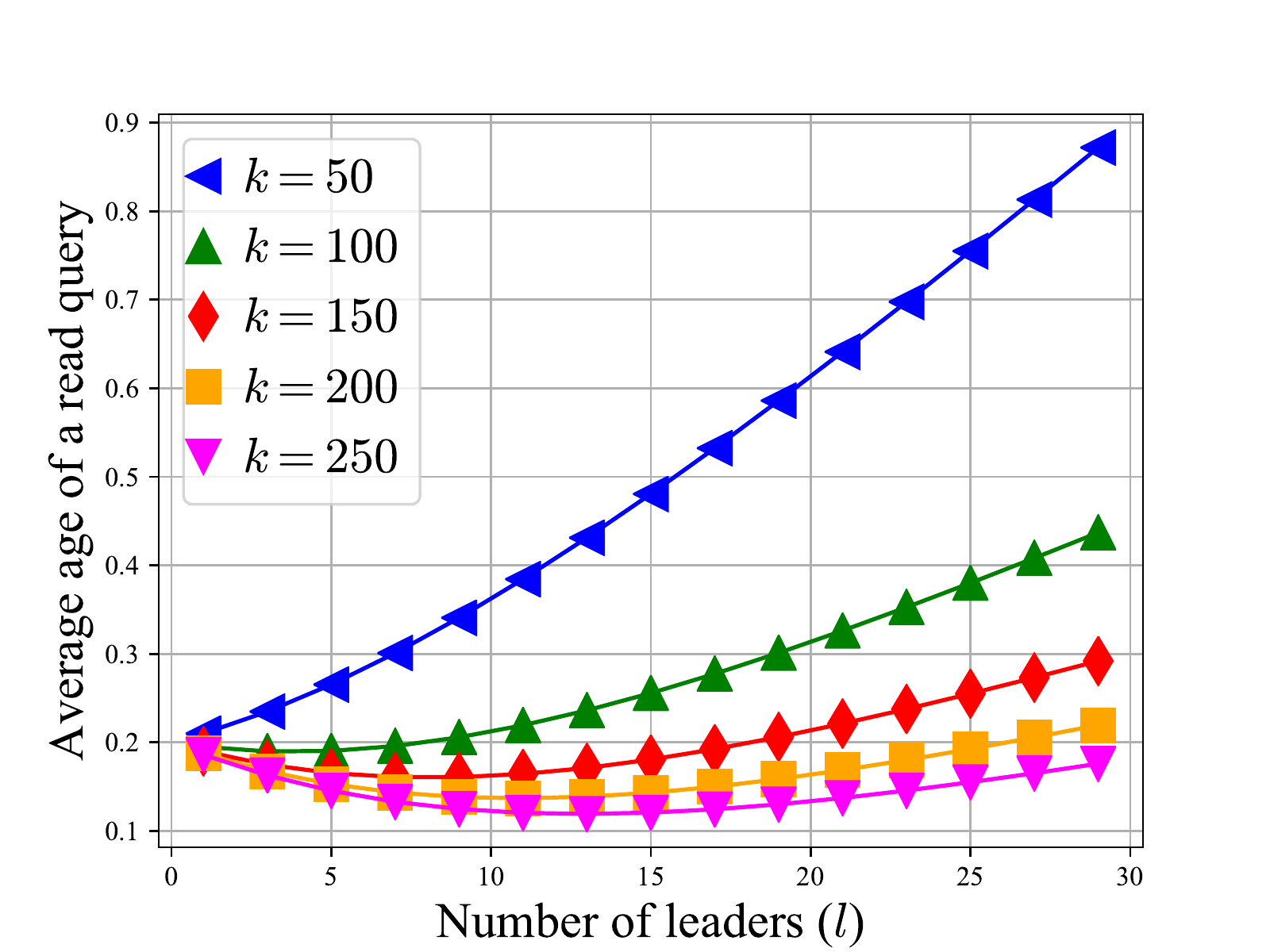}
        \caption{The average age of a read query vs. the number of leaders, see (\ref{equ:avgAgeExp}), when $n=50$, $r=4$ and $\lambda=1$. With small $k$, the average age increases monotonically with $l$. With larger $k$, it decreases initially, reaches an optimum point and then it increases.}
    \label{fig:lefkt_r5}
    \end{figure}

Fig. \ref{fig:lefkt_r1} shows the variation of the average age of a read query as the number of leaders change, when $n=50$ and $r=1$. The effect of increasing the number of leaders on the average age depends on the parameter $k$. With $k=50$ the average age increases monotonically; while for $k\geq100$, it decreases, also monotonically, with $l$. Intuitively, when writing on leaders is relatively slow, i.e. $k$ is small, having more leaders within a fixed number of nodes increases the staleness of an update, since it is available to read only after it is committed. On the other hand, when writing to the leaders is fast ($k$ is large), then having more leaders increases the chance of a read query to be sent to a consistent node. This also can be verified from (\ref{equ:avgAgeModel}) by substituting $\lambda=1$, $r=1$ and $n=50$,
    \begin{equation}
        \mathbbm{E}[\Delta]=1+\left[\frac{75}{k}-1\right]\frac{l}{n}.
        \label{equ:r1}
    \end{equation}

In Fig. \ref{fig:lefkt_r5}, the effect of number of leaders on average age is illustrated when $n=50$ and $r=5$. With a larger size of read query, the probability that a read query is sent to a consistent node is higher. For that reason the average age is dropped in Fig. \ref{fig:lefkt_r5} compared to Fig. \ref{fig:lefkt_r1}. On the other hand, the monotonic changes in the average age do not occur for every value of $k$ with $r=5$. For $k\geq100$, the average age decreases initially with the number of leaders and after reaching an optimum point it starts increasing. The reason for this behaviour can be explained as follows. Increasing the number of leaders has two competing effects on the average age. First, by increasing the commit time it increases the staleness of an update. Second, by increasing the number of leaders and, at the same time, reducing the number of followers it increases the probability of a read query to be sent to a consistent and receive (relatively) less stale data. Now, when $r$ is large, the probability that a read query is sent to a consistent node is already high. Therefore, the second effect has smaller contribution to the average age. Thus, increasing the number of leaders, even when they are fast to write, does not decrease the average age monotonically. From (\ref{equ:avgAgeModel}), the range of $k$ for which the average age initially decreases is $k\geq\lceil{3n(n-1)/2(n-r)}\rceil$. 
With $n=50$ and $r=5$, the average age initially decreases with $k\geq82$.
    \begin{figure}[t]
        \centering
        \includegraphics[width=\columnwidth,trim={8 3 40 40},clip]{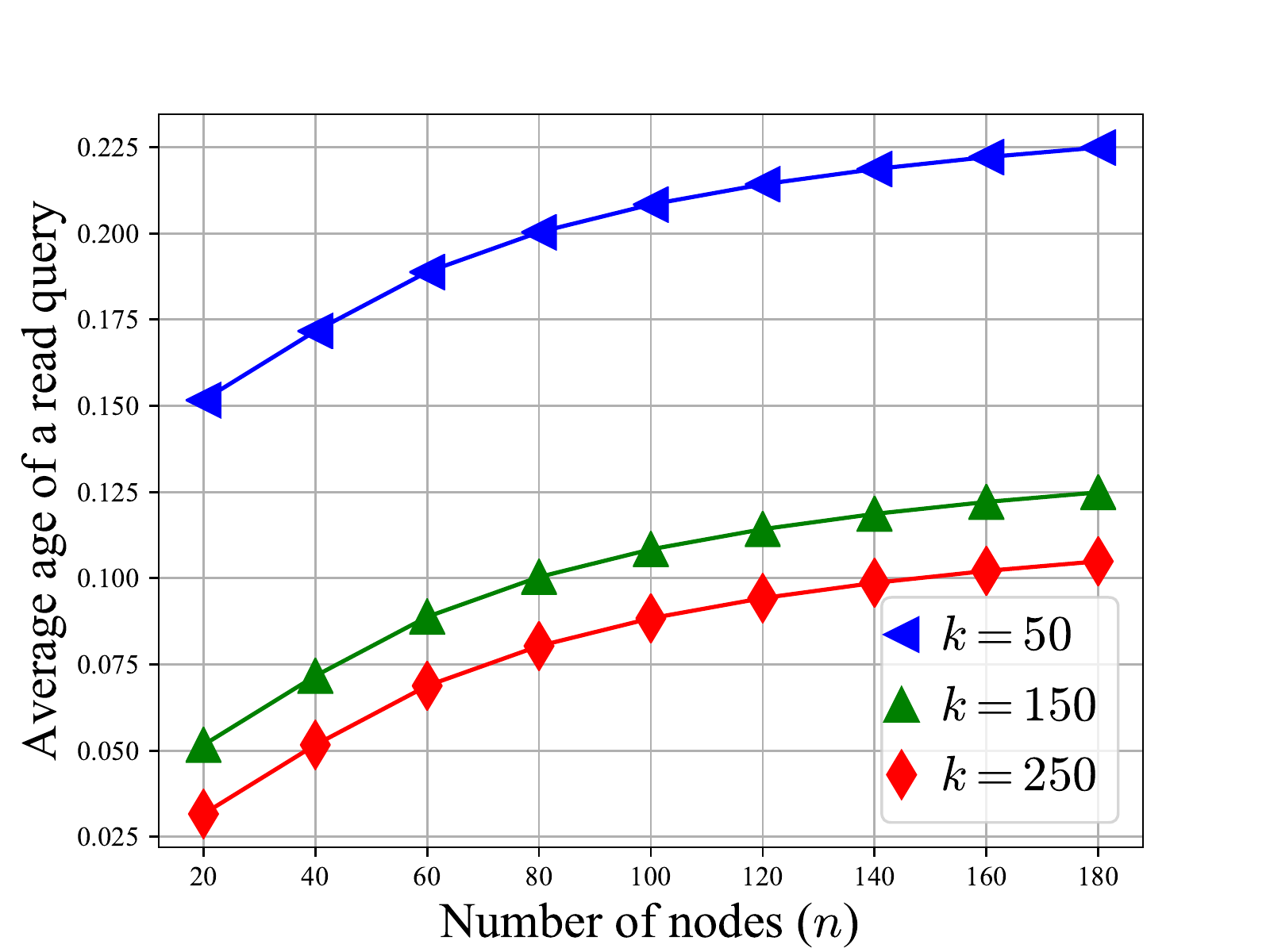}
        \caption{Average age of a read query vs. the number of nodes, see (\ref{equ:avgAgeExp}), when $r=10$, $l=5$ and $\lambda=1$. The average age increases monotonically with the number of nodes.}
    \label{fig:nefkt_rfix}
    \end{figure}

It is clear that increasing the size of a read query $r$ reduces the average age of its retrieved data. Furthermore, with fixed $l$, increasing the number of nodes reduces the probability of a read query to be sent to a consistent node and thus it increases the average age, as it is shown in Fig. \ref{fig:nefkt_rfix}. Nevertheless, the effect of changing both $n$ and $r$, at the same time, is more complex. Consider the following scenario. A leader-based database has stored a data chunk, with parameters $n$, $r$ and $l$. Due to a growth in the demand, the system administrator decides to increase the number of replicas $n$, but without deteriorating the performance. To that end, she could increase the size of read queries $r$ and therefore may ask the following question: how much increment in $r$ is needed when $n$ increases to maintain the same average age? Our numerical results show that, with establishing a relationship between $n$ and $r$, one can possibly avoid the increase in the average age as $n$ increases. For instance, in Fig. \ref{fig:nefkt_rvar}, the relationship between the two parameters is $r=10+n/20$. With this dependence, we can see that increasing the number of nodes beyond $n=100$ does not increase the average age. This result could be especially useful for systems with time-varying demands. Finding the exact dependence, however, is out of the scope of this paper and is left as an open problem for future study.
    \begin{figure}[t]
        \centering
        \includegraphics[width=\columnwidth,trim={8 3 40 40},clip]{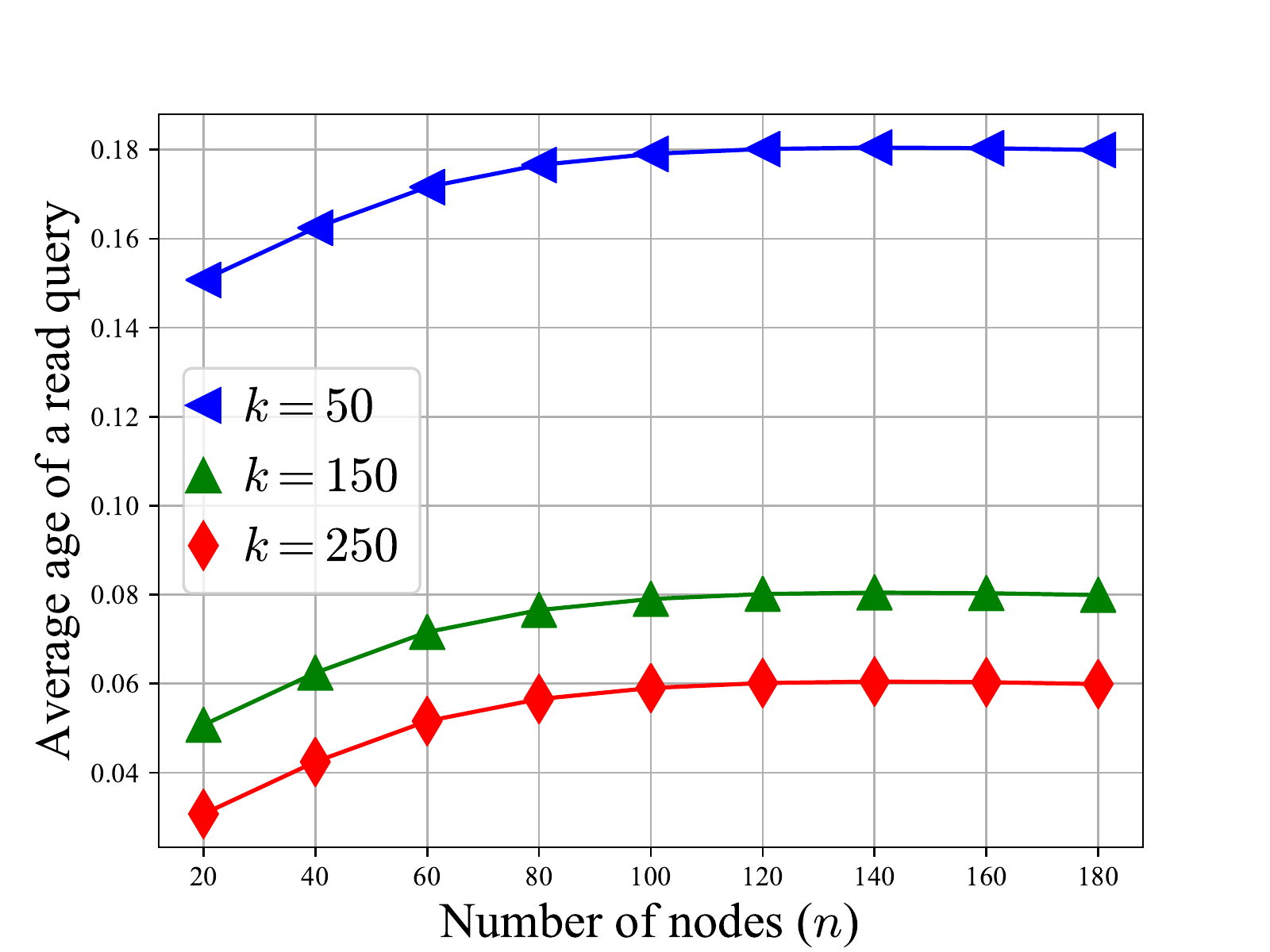}
        \caption{Average age of a read query vs. the number of nodes, see (\ref{equ:avgAgeExp}), when $l=5$ and $\lambda=1$. The size of the read query is dependent to the number of nodes, $r=10+n/20$. Beyond $n=100$ the average age does not increase with $n$.}
    \label{fig:nefkt_rvar}
    \end{figure}
\section{Conclusion}
The average age of the retrieved data by an instantaneous read query in a leader-based database was studied. With a deterministic model for the writing time on the leaders and a probabilistic model for the writing time on a follower, the average age of read query was analyzed. Closed-form average age under exponential distribution of the follower writing time was derived. It was shown numerically that, when the writing time on the leaders group scales linearly with the number of leaders, the average age could monotonically increase, monotonically decrease or behave non-monotonically, depending on the size of the read query and the relative speed of writing on a leader to writing on a follower. Furthermore, for a data with dynamic demands, it was shown that it is possible to keep the average age fixed when the number of follower nodes increases, with a slight increase in the size of the read query.

\section*{Acknowledgement}
This research was supported in part by the NSF awards No. CNS-1717041, CIF-1717314 and CCF-1559855.
\newpage
\bibliographystyle{IEEEtran}
\bibliography{ref,AOI-2019-01}
\end{document}